\newtheorem{thm}{Theorem}[section]
\newtheorem{cor}[thm]{Corollary}
\theoremstyle{definition}
\theoremstyle{remark}
\numberwithin{equation}{section}
\begin{document}
\title{A symmetric difference-differential Lax pair for Painlev\'e VI}

\author{Chris M. Ormerod}
\address{California Institute of Technology, Mathematics 253-37, Pasadena, CA 91125}
\email{cormerod@caltech.edu}
\author{Eric Rains}
\address{California Institute of Technology, Mathematics 253-37, Pasadena, CA 91125}
\email{rains@caltech.edu}

\keywords{Painlev\'e equations, Lax pair, isomonodromy, difference equations} 

\begin{abstract}
We present a Lax pair for the sixth Painlev\'e equation arising as a continuous isomonodromic deformation of a system of linear difference equations with an additional symmetry structure. We call this a symmetric difference-differential Lax pair. We show how the discrete isomonodromic deformations of the associated linear problem gives us a discrete version of the fifth Painlev\'e equation. By considering degenerations we obtain symmetric difference-differential Lax pairs for the fifth Painlev\'e equation and the various degenerate versions of the third Painlev\'e equation.
\end{abstract}

\maketitle

\section{Introduction}

The Painlev\'e equations are second order nonlinear differential equations whose only movable singularities are poles \cite{PainleveProperty}. The most general case in the classification of the Painlev\'e equations is the sixth Painlev\'e equation, written in standard form as 
\begin{align}
\label{P6}\tag{$\mathrm{P}_{VI}$}\dfrac{\mathrm{d}^2y}{\mathrm{d}t^2} =& \frac{1}{2} \left(\frac{1}{y}+ \frac{1}{y-1}+ \frac{1}{y-t}\right) \left(\dfrac{\mathrm{d}y}{\mathrm{d}t} \right)^2 - \left(\frac{1}{t}+\frac{1}{t-1}+\frac{1}{y-t}\right) \dfrac{\mathrm{d}y}{\mathrm{d}t} \\
&+ \frac{y(y-1) (y-t)}{(t-1)^2 t^2} \left(\alpha +\frac{\beta  t}{y^2}+\frac{\gamma 
   (t-1)}{(y-1)^2}+\frac{\delta  (t-1) t}{(y-t)^2}\right),\nonumber
 \end{align}
where $\alpha, \beta, \gamma, \delta \in \mathbb{C}$ are parameters. For special values of the parameters the solutions of \eqref{P6} have been used to express classes of Einstein metrics \cite{Tod1994}, correlation functions in the 2D Ising model \cite{Jimbo1980}, express the eigenvalue distributions for certain ensembles of random matrices \cite{Forrester2006} and characterize the reductions of nonlinear wave equations \cite{SKdVP6I} and the self-dual Yang-Mills equations \cite{Mason1993}. As a mathematical object, \eqref{P6} possesses a group of B\"acklund transformations that is of affine Weyl type $D_4^{(1)}$ \cite{Okamoto:Studies:I}, possesses solutions expressible in terms of Gauss's hypergeometric function \cite{Iwasaki2013} and admits a surface of initial conditions with a rich geometric structure \cite{Inaba2003}. 

Identifying the sixth Painlev\'e equation in applications usually hinges upon making a correspondence with a known Lax pair. This is usually done by showing that a system arises as an isomonodromic deformation of an associated linear problem of the right type. The most celebrated example arises as the isomonodromic deformations of a second order linear differential equation with four Fuchsian singularities \cite{Fuchs2, Fuchs1}. This makes understanding the Lax pairs of \eqref{P6} critically important in applications. For this reason, the Lax pairs of \eqref{P6} have been the subject of many works \cite{Conte2007, Kapaev1999, YamadaNoumi:LaxP6}.

Finding a Lax pair for a given system can be a difficult or even impossible task. The starting point is usually an ansatz made about the properties of some associated linear problem, one then tries to show that a given system coincides with the system of isomonodromic deformations \cite{Jimbo:Monodromy1}. This ansatz can be guided by the geometry of the moduli space of linear problems, which may be moduli spaces of systems of linear differential equations (or connections) \cite{Boalch2001, Hitchin1987}, or moduli spaces of difference equations, which were the subject of an article by one of the authors \cite{Rains2013}. 

The goal of this paper is to present a new Lax pair for \eqref{P6} which has been derived as a {\em continuous} isomonodromic deformation of a system of linear {\em difference} equations. We may write this system in matrix form as
\begin{subequations}
\begin{align}
\label{linear1}Y(x+1) =& A(x) Y(x),
\end{align}
where $x$ is called a spectral variable. The isomonodromic deformations are with respect to a new independent variable $t$, in which the evolution in $t$ is governed by an auxiliary linear problem of the form
\begin{align}
\label{deform1}\dfrac{\mathrm{d} Y(x)}{\mathrm{d}t}=& \mathcal{B}(x)Y(x).
\end{align}
\end{subequations}
We call Lax pairs that take the form \eqref{linear1} and \eqref{deform1} difference-differential Lax pairs. A novel feature of the Lax pair for \eqref{P6} we present is the presence of an additional symmetry; that the solutions of \eqref{linear1} satisfy 
\begin{equation}\label{linsymY}
Y(x) = Y(-x- \sigma),
\end{equation}
where $\sigma \in \mathbb{C}$. We call Lax pairs of this form symmetric difference-differential Lax pairs. By considering degenerations of the Lax pair we present we also obtain symmetric difference-differential Lax pairs for
 \begin{align}
\label{P5}\tag{$\mathrm{P}_{V}$}\dfrac{\mathrm{d}^2y}{\mathrm{d}t^2} =& \left( \frac{1}{2y}+\frac{1}{y-1}\right) \left(\dfrac{\mathrm{d}y}{\mathrm{d}t} \right)^2 - \dfrac{1}{t} \dfrac{\mathrm{d}y}{\mathrm{d}t} + \frac{(y-1)^2 }{t^2} \left(\alpha y +\frac{\beta }{y}\right) \\
&+ \dfrac{\gamma y}{t} - \dfrac{1}{2} \dfrac{y(y+1)}{y-1},\nonumber
\end{align}
\begin{align}
\label{P31}\tag{$\mathrm{P}_{III}(D_6)$} \dfrac{\mathrm{d}^2y}{\mathrm{d}t^2} =& \dfrac{1}{y} \left( \dfrac{\mathrm{d}y}{\mathrm{d}t}\right)^2 - \dfrac{1}{t} \dfrac{\mathrm{d}y}{\mathrm{d}t} + \dfrac{\alpha y^2 + \beta}{t} + \gamma y^3 + \dfrac{\delta}{y},
\end{align}
\begin{align}
\label{P32}\tag{$\mathrm{P}_{III}(D_7)$} \dfrac{\mathrm{d}^2y}{\mathrm{d}t^2} =& \dfrac{1}{y} \left( \dfrac{\mathrm{d}y}{\mathrm{d}t}\right)^2 - \dfrac{1}{t} 
\dfrac{\mathrm{d}y}{\mathrm{d}t} - \dfrac{2y^2}{t^2}  + \dfrac{\beta}{4t} - \dfrac{1}{y},
\end{align}
\begin{align}
\label{P33}\tag{$\mathrm{P}_{III}(D_8)$} \dfrac{\mathrm{d}^2y}{\mathrm{d}t^2} =& \dfrac{1}{y} \left( \dfrac{\mathrm{d}y}{\mathrm{d}t}\right)^2 - \dfrac{1}{t} \dfrac{\mathrm{d}y}{\mathrm{d}t} + \dfrac{y^2}{t^2}-\dfrac{1}{t}.
\end{align}
From the classification of Painlev\'e equations by their surface of initial conditions \cite{Sakai:Rational} the three versions of the third Painlev\'e equation have surfaces of initial conditions with distinct symmetry groups, hence, should be considered as distinct cases. In particular, \eqref{P32} and \eqref{P33} do not admit the first Painlev\'e equation as a limit. A degeneration diagram is shown in Figure \ref{degen}.

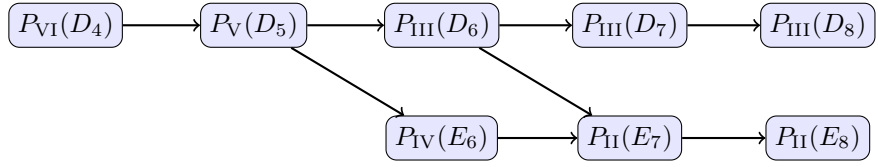
\begin{figure}[!ht]
\begin{tikzpicture}
\node[rectangle,fill=blue!10,rounded corners,draw=black] (P6) {$P_{\mathrm{VI}}(D_4)$};
\node[rectangle,fill=blue!10,rounded corners,draw=black,right of=P6, node distance=2.5cm] (P5) {$P_{\mathrm{V}}(D_5)$};
\node[rectangle,fill=blue!10,rounded corners,draw=black,right of=P5, node distance=2.5cm] (P31) {$P_{\mathrm{III}}(D_6)$};
\node[rectangle,fill=blue!10,rounded corners,draw=black,right of=P31, node distance=2.5cm] (P32) {$P_{\mathrm{III}}(D_7)$};
\node[rectangle,fill=blue!10,rounded corners,draw=black,right of=P32, node distance=2.5cm] (P33) {$P_{\mathrm{III}}(D_8)$};
\node[rectangle,fill=blue!10,rounded corners,draw=black,below of=P31, node distance=1.5cm] (P4) {$P_{\mathrm{IV}}(E_6)$};
\node[rectangle,fill=blue!10,rounded corners,draw=black,below of=P32, node distance=1.5cm] (P2) {$P_{\mathrm{II}}(E_7)$};
\node[rectangle,fill=blue!10,rounded corners,draw=black,below of=P33, node distance=1.5cm] (P1) {$P_{\mathrm{II}}(E_8)$};
\draw[->,thick] (P6) -- (P5);
\draw[->,thick] (P5) -- (P4);
\draw[->,thick] (P5) -- (P31);
\draw[->,thick] (P31) -- (P32);
\draw[->,thick] (P32) -- (P33);
\draw[->,thick] (P31) -- (P2);
\draw[->,thick] (P2) -- (P1);
\draw[->,thick] (P4) -- (P2);
\end{tikzpicture}
\caption{The degeneration diagram for the continuous Painlev\'e equations, including the degenerate cases of the third Painlev\'e equation. The brackets indicate the affine Weyl symmetry of the rational surface of initial conditions.\label{degen}} 
\end{figure}

Amongst the known Lax pairs for the Painlev\'e equations appearing in figure \ref{degen}, we note that there are differential-differential Lax pairs for each of the Painlev\'e equations \cite{Jimbo:Monodromy2, Ohyama2006}. However, there are relatively few works that consider continuous isomonodromic deformations of linear difference equations. Difference-differential Lax pairs for $\mathrm{P}_{I}$, $\mathrm{P}_{II}$, $\mathrm{P}_{IV}$, $\mathrm{P}_{V}$ and $\mathrm{P}_{VI}$ all appear in \cite{Adler1994} and the difference-differential Lax pairs for the versions of $P_{III}$ only recently appeared \cite{Kajiwara2015}.

We organize this article as follows: In \S 2 we present some background on known Lax pairs of the sixth Painlev\'e equation. In \S 3 we present our new Lax pair and a system of isomonodromic deformations. In \S 4 we present a general form of the Lax pair, which makes it clear how one degenerates the Lax pair to give \eqref{P5}, \eqref{P31}, \eqref{P32} and \eqref{P33}. In \S 5, we present a simple form of \eqref{P6} in which the limits to versions of \eqref{P5}, \eqref{P31}, \eqref{P32} and \eqref{P33} easily follow.

\section{Background and motivation}

In 1905, it was reported by R. Fuchs that if a scalar linear differential equation with four regular singular points, located at $0$, $1$, $t$ and $\infty$ and one apparent singularity at a value $y$, were to possess a monodromy representation that was independent of the singular point $t$, then $y$ necessarily satisfies \eqref{P6}. We may transform this to the second order Fuchsian differential equation, giving the following result. 

\begin{thm}[\cite{Fuchs2, Fuchs1}]
The isomonodromic deformations of the second order linear differential equation
\begin{align}\label{linear2}
\phi_{xx}(x,t) + \tau_1(x,t) \phi_x(x,t) + \tau_2(x,t) \phi(x,t) = 0,
\end{align}
where 
\begin{align*}
\tau_1(x,t) &= \dfrac{1-\kappa_0}{x} + \dfrac{1-\kappa_1}{x-1} + \dfrac{1-\kappa_t}{x-t} + \dfrac{1}{x-y},\\
\tau_2(x,t) &= \dfrac{1}{x(x-1)} \left( \dfrac{y(y-1)z}{x-y} - \dfrac{t(t-1)H_{VI}}{x-t} + \rho(\kappa_\infty+\rho) \right),\\
H_{VI} &= \dfrac{y(y-1)(y-t)}{t(t-1)} \left( z^2 - \left( \dfrac{\kappa_0}{y} + \dfrac{\kappa_1}{y-1} + \dfrac{\kappa_t-1}{y-t} \right) z + \dfrac{\rho(\kappa_\infty + \rho)}{y(y-1)} \right),\\
& \kappa_0 + \kappa_1 + \kappa_t + \kappa_\infty + 2 \rho = 1.
\end{align*}
is equivalent to $y(t)$ satisfying \eqref{P6} where
\begin{align*}
\alpha &= \dfrac{\kappa_\infty^2}{2}, \hspace{1cm} \beta = -\dfrac{\kappa_0^2}{2}, \hspace{1cm} \gamma = \dfrac{\kappa_1^2}{2}, \hspace{1cm} \delta = \dfrac{1-\kappa_t^2}{2}.
\end{align*}
\end{thm}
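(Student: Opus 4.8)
The plan is to follow the classical route of Fuchs, Schlesinger and Jimbo--Miwa, adapted to the scalar form \eqref{linear2}. First I would carry out the local (Frobenius) analysis at each singular point: at $x=0,1,t$ the exponents are $\{0,\kappa_0\}$, $\{0,\kappa_1\}$, $\{0,\kappa_t\}$, at $x=\infty$ they are $\{\rho,\rho+\kappa_\infty\}$, and at $x=y$ the indicial equation forces the exponents $\{0,2\}$. Summing all exponents and using the Fuchs relation for a second order Fuchsian equation with five singular points on $\mathbb{P}^1$ (the sum must equal $5-2=3$) reproduces the stated constraint $\kappa_0+\kappa_1+\kappa_t+\kappa_\infty+2\rho=1$. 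The substantive point here is that the exponents at $x=y$ differ by the positive integer $2$, so for a generic accessory parameter there is a logarithmic solution; demanding that $x=y$ be an \emph{apparent} singularity is a single polynomial condition on the accessory parameters, and a short computation shows it holds precisely when the residue of $\tau_2$ at $x=t$ equals $-t(t-1)H_{VI}$ with $H_{VI}$ as written, $z$ being the residue of $\tau_2$ at $x=y$. Thus the parametrization in the statement is exactly the one that renders $x=y$ apparent for every $(y,z)$.

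Next I would impose that the monodromy representation be independent of $t$. Differentiating a fundamental solution in $t$ and using constancy of its monodromy yields an auxiliary equation of the shape $\partial_t\phi = A(x,t)\,\partial_x\phi + B(x,t)\,\phi$ with $A,B$ rational in $x$ having poles only at the moving singularities $x=t$ and $x=y$; matching the pole structure pins down $A$ in terms of the data already present, and $B$ is then fixed by requiring that the deformation not destroy the apparentness at $x=y$. Equivalently one may pass to the companion $2\times 2$ first order system, which is Fuchsian with poles at $0,1,t,\infty$ and realizes $y$ as the zero of its upper-right entry, and invoke the Schlesinger equations directly, as in \cite{Jimbo:Monodromy1}.

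The compatibility of the $x$-equation and the $t$-equation then produces the Hamiltonian system $\dot y = \partial H_{VI}/\partial z$, $\dot z = -\partial H_{VI}/\partial y$. Solving the first relation for $z$ as an affine function of $\dot y$, substituting into the second and eliminating $z$ gives a second order ODE for $y(t)$, which a direct (if lengthy) simplification identifies with \eqref{P6} under $\alpha=\kappa_\infty^2/2$, $\beta=-\kappa_0^2/2$, $\gamma=\kappa_1^2/2$, $\delta=(1-\kappa_t^2)/2$. The converse reverses this: given a solution $y(t)$ of \eqref{P6} one defines $z(t)$ by $\dot y=\partial H_{VI}/\partial z$, builds the connection \eqref{linear2}, and verifies that the Schlesinger/compatibility equations hold, so the monodromy is frozen.

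I expect the main obstacle to be twofold. The conceptually delicate part is isolating the deformation term $B(x,t)$ so that the apparent singularity is carried along the flow: this is exactly the step that forces the otherwise-free accessory parameter to evolve by Hamilton's equations, and it is where the consistency of the whole picture is really earned. The computationally heavy part is the final elimination of $z$ and the bookkeeping of the parameter dictionary $\alpha,\beta,\gamma,\delta \leftrightarrow \kappa_0,\kappa_1,\kappa_t,\kappa_\infty$. Since every ingredient is classical, one could instead simply cite \cite{Fuchs2, Fuchs1} together with the Hamiltonian formulation in \cite{Jimbo:Monodromy1}; the sketch above is what underlies that citation.
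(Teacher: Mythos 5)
Your proposal is correct and follows essentially the same route as the paper: introduce the auxiliary $t$-equation \eqref{deform2}, determine its coefficients by comparing the two computations of $\phi_{xxt}$, read off Hamilton's equations for $H_{VI}$, and eliminate $z$ to obtain \eqref{P6} with the stated parameter dictionary. The extra material you supply on local exponents, the Fuchs relation, and the apparent-singularity condition at $x=y$ is consistent background that the paper's parametrization already encodes and its proof takes for granted.
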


\begin{proof}
If we move $t$ in a continuous manner, we may use \eqref{linear2} to express the evolution in $t$ in the general form
\begin{align}\label{deform2}
\phi_t(x,t) + \sigma_1(x,t) \phi_x(x,t) + \sigma_2(x,t) \phi(x,t) = 0.
\end{align}
We can compute $\sigma_1(x,t)$ and $\sigma_2(x,t)$ by comparing the two ways of calculating $\phi_{xxt}(x,t)$; either by taking the derivative in $t$ of \eqref{linear2} or the second derivative in $x$ of \eqref{deform2}. In each case, these may be expressed as linear combinations of $\phi(x,t)$ and $\phi_x(x,t)$. Comparing these expressions is equivalent to Hamilton's equations, given by
\begin{align}\label{H6}
\dfrac{\mathrm{d} y}{\mathrm{d}t} =& \dfrac{\partial H_{VI}}{\partial z}, \hspace{2cm} \dfrac{\mathrm{d} z}{\mathrm{d}t} = -\dfrac{\partial H_{VI}}{\partial y}.
\end{align}
which is equivalent to $y(t)$ satisfying \eqref{P6} for the above parameters.
\end{proof}

The equations \eqref{linear2} and \eqref{deform2} in this calculation constitute a differential-differential Lax pair for \eqref{P6}. Alternatively, one may express the isomonodromic deformation problem in terms of matrices. It was the work of Jimbo, Miwa and Ueno that extended the work on isomonodromic deformations from systems with only Fuchsian singularities to systems of differential equations with higher order singularities. It is in this work we find the following parameterization of a Lax pair for \eqref{P6} in terms of matrices \cite{Jimbo:Monodromy2}.

\begin{thm}[\cite{Jimbo:Monodromy2}]
Consider the linear system of linear difference equations 
\begin{align}\label{linear3}
\dfrac{\mathrm{d}Y(x)}{\mathrm{d}x} = \left( \dfrac{A_0}{x} + \dfrac{A_1}{x-1} + \dfrac{A_t}{x-t} \right)Y(x), \\
A_0 + A_1 + A_t + \begin{pmatrix} \kappa_1 & 0 \\ 0 & \kappa_2 \end{pmatrix} = 0, \nonumber 
\end{align}
where
\begin{equation}
A_i = \begin{pmatrix} z_i + \theta_i & u_i z_i \\ \dfrac{z_i + \theta_i}{u_i} & -z_i \end{pmatrix},
\end{equation}
for $i = 0,1,t$. If $A(x) = (a_{i,j}(x))$ with
\begin{align}
a_{1,2}(x) = \dfrac{w(x-y)}{x(x-1)(x-t)}, \hspace{1cm} a_{1,1}(y) = z, 
\end{align}
then moving the parameter $t$ isomonodromically requires that $y = y(t)$ satisfies \eqref{P6} for 
\begin{align*}
\alpha &= \dfrac{(\kappa_1 - \kappa_2 -1)^2}{2}, \hspace{1cm} \beta = -\dfrac{\theta_0^2}{2}, \hspace{1cm} \gamma = \dfrac{\theta_1^2}{2}, \hspace{1cm} \delta = -\dfrac{1-\theta_t^2}{2}.
\end{align*}
\end{thm}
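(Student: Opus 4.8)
The plan is to follow the standard isomonodromy-to-Painlev\'e recipe of Jimbo--Miwa--Ueno. First I would make precise what ``moving $t$ isomonodromically'' means: one augments \eqref{linear3} with an auxiliary linear system $\partial_t Y = \mathcal{B}(x,t)Y$ and demands that the mixed partials of $Y$ agree, i.e.\ that the zero-curvature relation $\partial_t A - \partial_x \mathcal{B} = [\mathcal{B},A]$ holds. Analysing the $x$-dependence --- poles only at $0,1,t$, the prescribed behaviour $\operatorname{diag}(\kappa_1,\kappa_2)$ at $\infty$, and the chosen normalization of $Y$ --- forces $\mathcal{B}(x,t) = -A_t/(x-t)$ up to an additive scalar term that a gauge transformation removes; feeding this back turns the zero-curvature relation into the Schlesinger equations
\[
\frac{\mathrm d A_0}{\mathrm d t} = \frac{[A_t,A_0]}{t},\qquad \frac{\mathrm d A_1}{\mathrm d t} = \frac{[A_t,A_1]}{t-1},\qquad \frac{\mathrm d A_t}{\mathrm d t} = -\frac{[A_t,A_0]}{t}-\frac{[A_t,A_1]}{t-1}.
\]
These preserve the eigenvalues of each $A_i$ and of $A_0+A_1+A_t$, which explains why $\theta_0,\theta_1,\theta_t$ and $\kappa_1-\kappa_2$ are constants of the motion and will end up as the parameters of \eqref{P6}.

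Next I would pass from the matrix unknowns to a pair of scalar unknowns. Since $A(x)=A_0/x+A_1/(x-1)+A_t/(x-t)$ and the $(1,2)$ entries of the residues sum to zero, $a_{1,2}(x)=w(x-y)/(x(x-1)(x-t))$ has the single finite zero $x=y$; set $z=a_{1,1}(y)$. The rank-one-plus-prescribed-trace form of each $A_i$ (the pairs $z_i,u_i$ constrained by the values of $\det A_i$) together with $A_0+A_1+A_t=-\operatorname{diag}(\kappa_1,\kappa_2)$ cuts the space of admissible $A(x)$, modulo diagonal conjugation, down to two dimensions coordinatized by $(y,z)$. I would then differentiate the relations $a_{1,2}(y(t),t)=0$ and $z(t)=a_{1,1}(y(t),t)$ in $t$, substitute the Schlesinger flow for $\dot A_i$, and re-express the result purely in terms of $y,z,t$ and the fixed parameters. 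This reduction --- in particular solving for $w$, the $z_i,u_i$, and the value $a_{2,2}(y)$ in terms of $y,z,t$ --- is the computational heart of the argument and the step I expect to be the main obstacle; it is also precisely where the form of $H_{VI}$ appearing in the previous theorem is forced upon us.

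Finally, I would recognize the resulting first-order system for $(y,z)$ as Hamilton's equations \eqref{H6} with the stated $H_{VI}$; eliminating $z$ gives \eqref{P6}, and reading off how $\theta_0,\theta_1,\theta_t$ and $\kappa_1-\kappa_2$ enter $H_{VI}$ yields the claimed $\alpha,\beta,\gamma,\delta$ (the sign in $\delta$ reflecting a shift $\kappa_t\mapsto\kappa_t-1$ relative to the scalar normalization). An alternative to the middle step is to apply a cyclic-vector gauge transformation turning \eqref{linear3} into a scalar second-order Fuchsian equation of the shape \eqref{linear2}, identifying the apparent singularity with $y$ and the exponents $\kappa_0,\kappa_1,\kappa_t,\kappa_\infty$ with $\theta_0,\theta_1,\theta_t,\kappa_1-\kappa_2$, and then invoking the preceding theorem; but checking that this gauge transformation intertwines the two notions of isomonodromic deformation and tracking the parameter dictionary through it is of comparable difficulty to the direct computation.
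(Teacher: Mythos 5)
Your proposal follows the same route as the paper: impose the zero-curvature/compatibility condition with $\mathcal{B}(x)=-A_t/(x-t)$, recover the Schlesinger equations from the residues, reduce to the coordinates $(y,z)$ defined by the zero of $a_{1,2}$ and the value $a_{1,1}(y)$, obtain the first-order (Hamiltonian) system, and eliminate $z$ to get \eqref{P6}. The paper simply records the explicit first-order equations for $y'$ and $z'$ that your "computational heart" step produces, so the two arguments are essentially identical.
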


\begin{proof}
One method of deriving the system of isomonodromic deformations is to apply the Schlesinger equations, which, in this case specialize to
\begin{align}
\label{Schles1}\dfrac{\mathrm{d}A_0}{\mathrm{d}t} &= \dfrac{[A_t,A_0]}{t}, \\
\label{Schles2}\dfrac{\mathrm{d}A_1}{\mathrm{d}t} &= \dfrac{[A_t,A_1]}{t-1}, \\
\label{Schles3}\dfrac{\mathrm{d}A_t}{\mathrm{d}t} &= \dfrac{[A_0,A_t]}{t} + \dfrac{[A_1,A_t]}{t-1}.
\end{align}
This should be analogous to taking the deformation equation to be a linear system of the form \eqref{deform1}. The compatibility requires that the mixed derivatives agree, giving the condition
\begin{equation}\label{laxeqdiff2}
\dfrac{\partial A(x)}{\partial t} + A(x) \mathcal{B}(x) = \dfrac{\partial \mathcal{B}(x)}{\partial x} + \mathcal{B}(x)A(x).
\end{equation}
In fact, by taking 
\begin{equation}\label{deform3}
\mathcal{B}(x) =  -\dfrac{A_t}{x-t},
\end{equation}
then the residues of \eqref{laxeqdiff2} at $x=0$, $x=1$ and $x= t$ give \eqref{Schles1}, \eqref{Schles2} and \eqref{Schles3} respectively. Computing the entries of \eqref{laxeqdiff2} gives
\begin{align*}
\dfrac{\mathrm{d}y}{\mathrm{d}t} =& \dfrac{y(y-1)(y-t)}{t(t-1)} \left( 2z - \dfrac{\theta_0}{y} - \dfrac{\theta_1}{y-1} - \dfrac{\theta_t-1}{y-t} \right),\\
\dfrac{\mathrm{d}z}{\mathrm{d}t} =& \dfrac{1}{t(t-1)} \left( (2(1+t)y-t-3y^2)z^2\right. \\
 &\left.+ ((2y-1-t)\theta_0 + (2y-t)\theta_1 + (2y-1)(\theta_t-1))z - \kappa_1(\kappa_2+1) \right)
\end{align*}
Eliminating $z$ from this equation shows $y$ satisfies \eqref{P6} for the given parameters.
\end{proof}

The equations \eqref{linear3} and \eqref{deform1} where $\mathcal{B}(x)$ is specified by \eqref{deform3} constitute a differential-differential Lax pair for \eqref{P6}. We consider the scalar Lax pair, \eqref{linear2} and \eqref{deform2}, and the matrix Lax pair, \eqref{linear3} and \eqref{deform3} to be characteristically the same since it is trivial to convert \eqref{linear2} into a matrix form and \eqref{linear3} into a scalar form. 

To obtain a characteristically different Lax pair, let us consider the above calculations in a geometric framework. A useful geometric relaxation of the notion of a linear system of differential equations is the notion of a connection on a vector bundle, say $V$. We recover the notion of a matrix differential equation when we restrict our attention the case in which $V= \mathcal{O}_{\mathbb{P}_1}^n$ (copies of sheaves of holomorphic functions on $\mathbb{P}_1$). In this setting, \eqref{P6} may be interpreted as a flow on the 2-dimensional moduli space of second order Fuchsian linear differential equation with four singular points and specified exponents at those points. This moduli space has the structure of a rational surface with an anticanonical divisor $Y = -\mathcal{K}$ with a decomposition into five irreducible components, one of these components has multiplicity 2. This means the moduli space may be identified as the surface of initial conditions for \eqref{P6}.

Difference equations can also be viewed as maps between the fibres of a vector bundle \cite{ArinkinBorodin}, however, this identification still hinges upon identifying a map as matrix (through trivializing the vector bundle). Both the moduli space of matrices and the moduli space of maps between vector bundles may be interpreted as moduli spaces of sheaves on a smooth projective variety. These relaxations of the difference equations and their moduli spaces have been the subject of a recent study \cite{Rains2013}. These moduli spaces can have much the same structure as their continuous counterparts, hence, our motivation was to find a moduli space of difference equations that could also be identified with the surface of initial conditions for \eqref{P6}. The canonical flow on this surface should and does correspond to \eqref{P6}, as we will demonstrate.

One way to find a moduli space systems of linear difference equations of the same type as a given linear system of differential equations is via $Z$-transform. This is an invertible transformation that turns a linear differential equation into a linear difference equation. We suppose that a solution of \eqref{linear2} is given by a formal biinfinite power series,
\[
\phi(x,t) = \sum_{\chi \in \mathbb{Z}} a(\chi,t) x^\chi,
\]
then the coefficients, $a(\chi,t)$ satisfy a non-autonomous linear difference in the variable $\chi$. While applying this procedure to \eqref{linear2} directly results in a third order equation, we may transform this into a second order differential equation in which the entries are at most quadratic in the new spectral variable. This allows us to obtain a difference equation of the form \eqref{linear1} where
\[
A(x) = A_0 + A_1 x + A_2x^2,
\]
with the properties that $A_2$ has eigenvalues $1$ and $t$, the trace of $A_1$ is constant and the determinant of $A(x)$ is of quartic.  We let $a_1, \ldots, a_4$ be the roots of $\det A(x)$, in which case we write
\begin{align}\label{detassym}
\det A(x) &= t(x-a_1)(x-a_2)(x-a_3)(x-a_4),\\
&= t ( x^4 - \zeta_1 x^3 + \zeta_2x^2 - \zeta_3 x+ \zeta_4), \nonumber
\end{align}
where $\zeta_1, \ldots, \zeta_4$ are the elementary symmetric functions. The moduli space of such matrices up to gauge equivalence is a rational surface. 

Due to the invertible nature of the $Z$-transform, and the results of \cite{Inaba2003}, we expect this moduli space to coincide with the space of initial conditions for \eqref{P6} on an open subset of the surface. Exploiting some freedom in how we choose $A(x)$ allows us to simplify the associated linear problem to the following 
\begin{align}\label{AdP5}
A(x) = \begin{pmatrix} (x-z)(x-\tau_1) + y_1 & w(x-z) \\ \dfrac{\tau_3 x + \tau_4}{w} & t(x-z)(x-\tau_2) + y_2 \end{pmatrix},
\end{align}
where we may solve the relations a the coefficients of $x$ in \eqref{detassym} by letting
\begin{subequations}\label{tauvals}
\begin{align}
\tau_1 &= - \sigma - z, \\
\tau_2 &= \zeta_1 + \sigma - z,\\ 
\tau_3 &= \dfrac{t \zeta_4- y_1 y_2}{z^2} - \dfrac{t \zeta_3}{z} + t \tau_1\tau_2 + t y_1+y_2 + tz(\tau_1 + \tau_2),\\
\tau_4 &= \frac{t \zeta_4 -\left(y_1+\tau _1 z\right) \left(t \tau _2 z+y_2\right)}{z},
\end{align}
\end{subequations}
with 
\begin{equation}\label{y12vals}
y_1 = (z-a_1)(z-a_2)y, \hspace{1cm} y_2 = \dfrac{t(z-a_3)(z-a_4)}{y}.
\end{equation}
This forms a suitable parameterization of the given linear system of difference equations in terms of three variables, $y$, $z$ and $w$. The pair $(y,w)$ are sometimes called the spectral coordinates \cite{Dzhamay2007} and $w$ is a gauge factor. It is worth noting that the variable $\sigma$ plays a role in the asymptotics of the solutions of \eqref{linear1}, if we let
\begin{align}
d_1 = \sigma, \hspace{1cm} d_2 = - \sigma - \zeta_1,
\end{align}
then $d_1$ and $d_2$ are two of the characteristic constants of \eqref{linear1} in \cite{Borodin:connection}.

\begin{thm}
The continuous isomonodromic deformation of the linear system specified by \eqref{linear1} with $A(x)$ given by \eqref{AdP5}, \eqref{tauvals} and \eqref{y12vals} in $t$ requires that $y(t)$ satisfies \eqref{P6} for the parameters
\begin{align*}
\alpha &= \dfrac{(a_1-a_2)^2}{2}, \hspace{1cm} \beta = -\dfrac{(a_3-a_4)^2}{2}, \\ 
\gamma &= \dfrac{(a_1+a_2+\sigma)^2}{2}, \hspace{1cm} \delta = -\dfrac{(a_3+a_4+\sigma)(2+a_3+a_4+\sigma)}{2}.
\end{align*}
\end{thm}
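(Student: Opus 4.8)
The plan is to exhibit a deformation matrix $\mathcal{B}(x)$ and verify the attendant compatibility condition. Writing the auxiliary problem as \eqref{deform1}, $\partial_t Y(x)=\mathcal{B}(x)Y(x)$, and demanding consistency with \eqref{linear1}, $Y(x+1)=A(x)Y(x)$, one finds that the two are compatible exactly when
\begin{equation*}
\mathcal{B}(x+1)\,A(x)\;=\;\frac{\partial A(x)}{\partial t}\;+\;A(x)\,\mathcal{B}(x),
\end{equation*}
and that the reflection \eqref{linsymY} forces $\mathcal{B}$ to respect the involution $x\mapsto -x-\sigma$. The proof then splits into three stages: (i) determine $\mathcal{B}(x)$; (ii) extract from the compatibility equation the induced flow on the coordinates $y$, $z$, $w$ of \eqref{AdP5}; (iii) show that this flow, once the gauge variable $w$ is discarded and $z$ is eliminated, is \eqref{P6} with the stated parameters.

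For stage (i): isomonodromy means freezing the connection data, that is, the roots $a_1,\ldots,a_4$ of $\det A(x)$ together with $\sigma$, so that $\partial_t A(x)$ and $A(x)^{-1}$ (whose denominator is $\det A(x)=t\prod_i(x-a_i)$) have prescribed poles and growth. Rewriting compatibility as $\mathcal{B}(x+1)=\partial_t A(x)\,A(x)^{-1}+A(x)\,\mathcal{B}(x)\,A(x)^{-1}$, the demand that this relation be self-consistent, together with the reflection symmetry, pins $\mathcal{B}(x)$ down to a low-complexity shape---a rational matrix function of $x$, symmetric under $x\mapsto -x-\sigma$, whose poles are controlled by the $a_i$---up to finitely many functions of $t$; a handful of the remaining matching conditions then fixes those in terms of $y,z,w,t$. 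Producing the correct ansatz is the step that calls for genuine insight: one can be guided by transporting the classical differential deformation matrix \eqref{deform3} through the $Z$-transform described above, and by insisting that the construction degenerate correctly to the lower Painlev\'e cases of \S 4.

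For stages (ii) and (iii): with $\mathcal{B}$ in hand, comparing coefficients of the powers of $x$ (equivalently, residues at the $a_i$ and at infinity) on the two sides of the compatibility equation yields a closed first-order system $\dot y=f(y,z,t)$, $\dot z=g(y,z,t)$, $\dot w=h(y,z,w,t)$. The $w$-equation decouples, $w$ being a pure gauge factor, and the $(y,z)$-subsystem is Hamiltonian; under the birational substitution encoded in \eqref{AdP5}, \eqref{tauvals} and \eqref{y12vals} it should coincide with Hamilton's equations \eqref{H6} for $H_{VI}$ with exponents $\kappa_\infty=\pm(a_1-a_2)$, $\kappa_0=\pm(a_3-a_4)$, $\kappa_1=\pm(a_1+a_2+\sigma)$, $\kappa_t=\pm(1+a_3+a_4+\sigma)$. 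Substituting these into $\alpha=\tfrac12\kappa_\infty^2$, $\beta=-\tfrac12\kappa_0^2$, $\gamma=\tfrac12\kappa_1^2$, $\delta=\tfrac12(1-\kappa_t^2)$ reproduces exactly the four parameters in the statement, the last because $1-\kappa_t^2=-(a_3+a_4+\sigma)(2+a_3+a_4+\sigma)$. Alternatively, one avoids the Hamiltonian altogether: solve $\dot y=f$ for $z$, substitute into $\dot z=g$, and check that the second-order ODE for $y$ that results is precisely \eqref{P6}.

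The obstacle is computational rather than conceptual. Once \eqref{tauvals} and \eqref{y12vals} are substituted, the entries of $A(x)$ become bulky rational expressions, so both the determination of $\mathcal{B}$ and the verification that the eliminated equation is \eqref{P6} with all four parameters coming out correctly are most safely done with computer algebra. The delicate structural point is to confirm that the deformation is genuinely isomonodromic---that $a_1,\ldots,a_4$ and $\sigma$, equivalently the connection matrices and characteristic constants of \eqref{linear1}, are first integrals of the flow, and not merely that the shape of $A(x)$ is preserved. A secondary subtlety is handling the symmetry \eqref{linsymY} correctly when reducing the freedom in $\mathcal{B}$.
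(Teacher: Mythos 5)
Your overall strategy---posit a deformation matrix $\mathcal{B}(x)$, impose the compatibility $\mathcal{B}(x+1)A(x)=\partial_t A(x)+A(x)\mathcal{B}(x)$ (i.e.\ \eqref{Laxasym}), read off a closed first-order system for $(y,z,w)$, and eliminate $z$---is exactly the route the paper takes, and your identification of the exponents $\kappa_\infty=a_1-a_2$, $\kappa_0=a_3-a_4$, $\kappa_1=a_1+a_2+\sigma$, $\kappa_t=1+a_3+a_4+\sigma$ is consistent with the stated parameters. But stage (i) contains a concrete error: the reflection \eqref{linsymY} is \emph{not} a hypothesis of this theorem. That symmetry belongs to the Lax pair of \S 3, where $A(x)$ satisfies \eqref{symA}; the matrix \eqref{AdP5} is the \emph{asymmetric} (Adler/Arinkin--Borodin) linear problem, and there $\sigma$ enters \eqref{tauvals} only through the characteristic constants $d_1=\sigma$, $d_2=-\sigma-\zeta_1$. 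Indeed $\det A(x)=t\prod_i(x-a_i)$ is not reciprocal under $x\mapsto-x-\sigma-1$, so $A(x)A(-x-\sigma-1)\neq I$. Imposing $\mathcal{B}(x)=\mathcal{B}(-x-\sigma)$ therefore over-constrains the problem: a nonconstant symmetric $\mathcal{B}$ must be a function of $x(x+\sigma)$, hence at least quadratic in $x$, whereas the deformation matrix that actually solves \eqref{Laxasym} here is \emph{linear} in $x$, namely $\mathcal{B}(x)=\tfrac{x}{t}\bigl(\begin{smallmatrix}0&0\\0&1\end{smallmatrix}\bigr)+\tfrac{1}{t(t-1)}\bigl(\begin{smallmatrix}0&w\\ \tau_3/w&0\end{smallmatrix}\bigr)$. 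Since any two polynomial deformation matrices for the same flow differ by a $D(x)$ with $D(x+1)A(x)=A(x)D(x)$, which for an irreducible system is a scalar matrix, no quadratic symmetric candidate exists and your ansatz returns no solution.

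The second, related, gap is that the one genuinely nontrivial input---the ansatz that $\mathcal{B}(x)$ be \emph{linear} in $x$, which is what renders the overdetermined system \eqref{Laxasym} solvable and produces the flow \eqref{H6eq}---is precisely the piece you leave open, deferring to ``genuine insight'' and computer algebra; the gesture toward transporting \eqref{deform3} through the $Z$-transform is not carried out and, combined with the reflection constraint, points in the wrong direction. The rest of your plan (decoupling of the gauge variable $w$, elimination of $z$, the $\kappa$-to-$(\alpha,\beta,\gamma,\delta)$ dictionary, and the conservation of $a_1,\dots,a_4$ and $\sigma$, which follows from taking determinants and traces in \eqref{Laxasym}) is sound once the correct $\mathcal{B}$ is in hand. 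The fix is local: drop the reflection constraint and take $\mathcal{B}$ linear in $x$, as in \cite{Adler1994}.
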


\begin{proof}
While there are very few works on the continuous isomonodromic deformations of linear systems of {\em difference} equation, we follow the work of \cite{Adler1994} by parameterizing the isomonodromic deformations via \eqref{deform2} where $B(x)$ is linear. The compatibility between equations of the form \eqref{linear1} and \eqref{deform2} may be written 
\begin{equation}\label{Laxasym}
\dfrac{\mathrm{d}A(x)}{\mathrm{d}t} + A(x)\mathcal{B}(x) - \mathcal{B}(x+1)A(x) = 0.
\end{equation}
We find that \eqref{Laxasym} is overdetermined when $B(x)$ is linear, hence, we may write
\begin{align*}
\label{Bform2} \mathcal{B}(x) = \dfrac{x}{t} \begin{pmatrix} 0 & 0 \\ 0 & 1 \end{pmatrix} + \dfrac{1}{t(t-1)} \begin{pmatrix} 0 & w \\ \dfrac{\tau_3}{w} & 0 \end{pmatrix},
\end{align*}
where the other relations give us
\begin{subequations}\label{H6eq}
\begin{align}
z' =& \dfrac{y_1-y_2}{t(t-1)},\\
y' =&  \dfrac{y^2(a_1+a_2-2z)+t(a_3+a_4-2z) + y(\sigma-t(\zeta_1+\sigma-2z)+2z)}{t(t-1)},\\
\dfrac{w'}{w} =& -\dfrac{\sigma+t(\zeta_1+\sigma-z)+z}{t(t-1)}.
\end{align}
\end{subequations}
Eliminating $z$ from this system gives us the result. 
\end{proof}

The equations \eqref{linear1} and \eqref{deform1} for $A(x)$ and $\mathcal{B}(x)$ given by \eqref{AdP5} and \eqref{Bform2} respectively constitutes a difference-differential Lax pair for \eqref{P6}. This result is essentially a different parameterization of the Lax pair that appeared in \cite{Adler1994}. 

The linear system of the form \eqref{linear1} with $A(x)$ given by \eqref{AdP5} is also connected to a known Lax pair for the discrete version of the fifth Painlev\'e equation \cite{Dzhamay2007}. We consider the discrete version of the fifth Painlev\'e equation to be the map 
\[
\left( \begin{array}{c c}
a_1 & a_2 \\
a_3 & a_4 \end{array}\, \sigma; y,z \right)  \to \left( \begin{array}{c c}
a_1 +1& a_2 +1 \\
a_3 & a_4 \end{array}\, \sigma-1; \tilde{y},\tilde{z} \right) 
\]
where the values of $(\tilde{y},\tilde{z})$ are related to $(y,z)$ via
\begin{align}
\begin{array}{c}
\tilde{z} + z = a_1+a_2+1- \dfrac{a_3t +a_4 t+\sigma t}{\tilde{y}-t}+\dfrac{a_1+a_2+\sigma+1}{\tilde{y}-1},\\
y \tilde{y} = \dfrac{t (z-a_3)(z-a_4)}{(z-a_1)(z-a_2)}.
\end{array}\tag{$d$-$\mathrm{P}_{V}$}\label{dP5}
\end{align}
This transformation is a B\"acklund transformation of \eqref{P6} which commutes with the time evolution of \eqref{P6}.

The way in which a Lax pair for \eqref{dP5} arises is that $a_1,\ldots, a_4, \sigma$ appear as variables that change by integer shifts under the action of discrete isomonodromic deformations \cite{Borodin:connection}. A discrete isomonodromic deformation is a transformation induced by a gauge transformation of the form
\begin{equation}\label{discisomonodromy}
\tilde{Y}(x) = R(x)Y(x),
\end{equation}
where $R(x)$ is some rational matrix. We expect $\tilde{Y}(x)$ to satisfy an equation of the form
\[
\tilde{Y}(x+1) = \tilde{A}(x) \tilde{Y}(x),
\]
where $\tilde{A}(x)$ and $A(x)$ are related by
\begin{equation}\label{ddcomp}
\tilde{A}(x)R(x) - R(x+1)A(x) = 0.
\end{equation}
Our nonautonomous difference equation arises from identifying this as a map from the moduli spaces containing $A(x)$ to the  moduli space of matrices containing $\tilde{A}(x)$ in some canonical coordinate system. 

\begin{thm}
Given \eqref{linear1} with $A(x)$ given by \eqref{AdP5}, the discrete isomonodromic deformation induced by \eqref{discisomonodromy} when $R(x)$ takes the form
\[
R(x) = \dfrac{xI + R_0}{(x-a_1-1)(x-a_2-1)}.
\]
is equivalent to \eqref{dP5}.
\end{thm}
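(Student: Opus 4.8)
The plan is to solve the compatibility relation \eqref{ddcomp} for $\tilde{A}(x)$ explicitly, to show that $\tilde{A}(x)$ again has the shape \eqref{AdP5} but with the parameter shift $(a_1,a_2,\sigma)\mapsto(a_1+1,a_2+1,\sigma-1)$, and then to read off the resulting birational map on the coordinates $(y,z)$ and recognise it as \eqref{dP5}. The first task is to pin down $R_0$. Clearing denominators in $\tilde{A}(x)R(x)=R(x+1)A(x)$ produces the polynomial identity
\[
\tilde{A}(x)\,(xI+R_0)\,(x-a_1)(x-a_2)=\bigl((x+1)I+R_0\bigr)(x-a_1-1)(x-a_2-1)\,A(x).
\]
Evaluating at $x=a_1$ (and using $a_1-a_2-1\neq0$ for generic parameters) forces $\bigl((a_1+1)I+R_0\bigr)A(a_1)=0$, and likewise at $x=a_2$. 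Since $a_i$ is a root of $\det A(x)$ the matrix $A(a_i)$ has rank one, so this says that $-(a_i+1)$ is an eigenvalue of $R_0$ with eigenline the column space of $A(a_i)$; for generic parameters the two eigenlines are distinct and $a_1+1\neq a_2+1$, so $R_0$ is uniquely determined. In particular $\operatorname{tr}R_0=-(a_1+a_2+2)$ and $\det(xI+R_0)=(x-a_1-1)(x-a_2-1)$, which is exactly the denominator appearing in $R(x)$; hence $R(x)^{-1}=\operatorname{adj}(xI+R_0)=xI+\operatorname{adj}(R_0)$ is a polynomial matrix and
\[
\tilde{A}(x)=\frac{\bigl((x+1)I+R_0\bigr)\,A(x)\,\bigl(xI+\operatorname{adj}R_0\bigr)}{(x-a_1)(x-a_2)}.
\]

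Next I would check that $\tilde{A}(x)$ lands in the family \eqref{AdP5} with the shifted parameters. The numerator on the right vanishes at $x=a_1$ and $x=a_2$ (again because $\bigl((a_i+1)I+R_0\bigr)A(a_i)=0$), so $\tilde{A}(x)$ is a polynomial matrix of degree two; its leading coefficient is $\operatorname{diag}(1,t)$ (from the expansion at $x=\infty$), so its off-diagonal entries are linear; and $\det\tilde{A}(x)=\det R(x+1)\,\det A(x)\,\det R(x)^{-1}=t(x-a_1-1)(x-a_2-1)(x-a_3)(x-a_4)$. Reading off the coefficient of $x$ in $\tilde{A}_{1,1}(x)$ (which, via \eqref{tauvals}, is $\tilde{\sigma}$) gives $\tilde{\sigma}=\sigma-1$. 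After an inessential constant diagonal conjugation, which only adjusts the gauge factor $w$, the matrix $\tilde{A}(x)$ therefore has exactly the form \eqref{AdP5}, \eqref{tauvals}, \eqref{y12vals} with parameters $(a_1+1,a_2+1,a_3,a_4,\sigma-1,t)$ and new coordinates $(\tilde{y},\tilde{z},\tilde{w})$, where $\tilde{z}$ is the root of the linear polynomial $\tilde{A}_{1,2}(x)$ and $\tilde{y}$ is fixed by $\tilde{A}_{1,1}(\tilde{z})=(\tilde{z}-a_1-1)(\tilde{z}-a_2-1)\tilde{y}$ (equivalently by $\tilde{A}_{2,2}(\tilde{z})=t(\tilde{z}-a_3)(\tilde{z}-a_4)/\tilde{y}$, the two being consistent by the determinant formula).

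It then remains to substitute the explicit $R_0$ — a combination of the columns of $A(a_1)$ and $A(a_2)$ — into the formula for $\tilde{A}(x)$, extract $\tilde{z}$ and $\tilde{y}$, and match with \eqref{dP5}. Of the two relations of \eqref{dP5}, the multiplicative one $y\tilde{y}=t(z-a_3)(z-a_4)/\bigl((z-a_1)(z-a_2)\bigr)$ is equivalent to $\tilde{y}=A_{2,2}(z)\,y/A_{1,1}(z)$, i.e. to a statement about the ratio of the diagonal entries of $A(z)$ at the old spectral point $x=z$ (where $A(z)$ is lower triangular), while the additive relation for $\tilde{z}+z$ is encoded in the coefficients of $x$ in the entries of $\tilde{A}(x)$, which determine $\tilde{\tau}_1,\tilde{\tau}_2$ and hence $\tilde{z}$ once $\tilde{\sigma}=\sigma-1$ and $\tilde{\zeta}_1=\zeta_1+2$ are known.

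The conceptual steps — solving for $R_0$, and showing $\tilde{A}(x)$ lies in the shifted family — are routine. I expect the main obstacle to be the last identification: carrying the somewhat unwieldy $R_0$ through the triple product $\tilde{A}(x)=R(x+1)A(x)R(x)^{-1}$ and verifying that the resulting rational expressions for $\tilde{z}$ and $\tilde{y}$ coincide exactly with \eqref{dP5}, with the additive relation — whose right-hand side mixes $\tilde{z}$, $\tilde{y}$ and the parameters — being the delicate one. Genericity hypotheses on the $a_i$ and on the relevant minors are used so that $R_0$, $\tilde{z}$ and $\tilde{y}$ are well defined, and the identity extends to all parameters by continuity.
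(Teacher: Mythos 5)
Your proposal is correct and follows essentially the same route as the paper: the evaluation of the cleared compatibility relation at $x=a_1,a_2$, giving $\bigl((a_i+1)I+R_0\bigr)A(a_i)=0$ and hence pinning down $R_0$ by its two eigenvalue/eigenline pairs, is exactly the ``residues of \eqref{ddcomp}'' computation the paper invokes, and the remaining step (verifying that the overdetermined system forces \eqref{dP5}) is the same deferred algebraic check in both arguments. The only organizational difference is that you solve for $R_0$ in terms of the old coordinates and push forward to read off $(\tilde y,\tilde z)$, whereas the paper writes $R_0$ directly in terms of the new variables $\tilde y,\tilde w$; your intermediate verifications (that $\tilde A$ is polynomial of degree two with leading term $\operatorname{diag}(1,t)$, shifted determinant, and $\tilde\sigma=\sigma-1$) are correct and in fact supply justification the paper omits.
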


\begin{proof}
We may derive the entries of $R_0$ in a number of ways, we may compute the residues of \eqref{ddcomp} at the points $x= a_1$, $x=a_2$, $x= a_1+1$ and $x= a_2+1$, or compare the expansions in $x$ around $x=0$ and $x=\infty$. We find a suitable form of $R_0$ to be
\[
R_0 = \begin{pmatrix}  z -1-a_1-a_2+ \dfrac{t(a_3+a_4+\sigma)}{\tilde{y}-t} & \dfrac{\tilde{w}-w}{t-1}\\
 \dfrac{\tau_3\tilde{w}-w\tilde{\tau}_3}{(t-1)w \tilde{w}} & \dfrac{t(a_3+a_4-1+\sigma-z)+(1+z)\tilde{y}}{t-\tilde{y}}\end{pmatrix}. 
\]
The overdetermined compatibility of \eqref{ddcomp} results in \eqref{dP5} and
\[
\dfrac{\tilde{w}}{w} = \dfrac{\tilde{y}-1}{\tilde{y}-t}.
\]
which is the equation satisfied by the variable encapsulating the gauge freedom.
\end{proof}

We consider the equations \eqref{linear1} with $A(x)$ given by \eqref{AdP5} and \eqref{discisomonodromy} to be a difference-difference Lax pair. This Lax pair was a result of Arinkin and Borodin \cite{ArinkinBorodin} (see also \cite{Dzhamay2007}). 

\section{Symmetric difference-differential Lax pair}

One of the corollaries of the work on moduli spaces of difference equations was the existence of two moduli spaces of systems of linear difference equations of second order with the same structure as the space of initial conditions for \eqref{P6}. One of these moduli spaces corresponded to a difference-differential Lax pair without symmetry, as given by Adler \cite{Adler1994}, while the other possesses some symmetry with 4 pairs of symmetric singular points. In this section, we provide a parameterization of the resulting Lax pair that gives \eqref{P6} as a system of continuous isomonodromic deformations.

We recently outlined a discrete version of the Garnier system in which the solutions of the associated linear problem are even functions \cite{Ormerod2016}. A small generalization of this is to consider system of difference equations symmetric around some value. We wish to consider these difference equations as pairs of equations of the form
\begin{subequations}\label{linearasym}
\begin{align}
\label{linsyma}Y(-x-\sigma-1) &= A(x) Y(x),\\
\label{linsymb}Y(-x-\sigma) &= Y(x),
\end{align}
\end{subequations}
for some $\sigma \in \mathbb{C}$. The combination of \eqref{linsyma} and \eqref{linsymb} recovers \eqref{linear1}, hence, we consider this to be a system of linear difference equations. This extra constraint is motivated by the additional structure that many elliptic hypergeometric functions and elliptic biorthogonal functions possess.  

For the system defined by \eqref{linearasym}, by \eqref{linsymb}, it is easy to show that  $A(x)$ must necessarily satisfy the condition
\begin{equation}\label{symA}
A(x) A(-x-\sigma-1) = I.
\end{equation}
As discussed in \cite{Ormerod2016}, we may always write $A(x)$ in the form
\begin{align}\label{ABfactor}
A(x)  = B(-x-\sigma-1)^{-1} B(x),
\end{align}
where $B(x)$ is rational. Without loss of generality, we may multiply by scalar factors so that $B(x)$ is a polynomial. Furthermore, gauge transformations of the form \eqref{discisomonodromy} have the effect of multiplying $B(x)$ only by a factor on the right, whereas $B(x)$ itself is only defined up to multiplication by a factor on the left. In particular, matrix $B(x)$ from \eqref{ABfactor} is subject to constant gauge transformations on the left and the right. 

To make a correspondence the previous section we let
\begin{equation}\label{detB}
\det B(x) = t\left(x - a_1\right)\left(x - a_2\right)\left(x - a_3\right)\left(x - a_4\right).
\end{equation}
We may use the gauge freedom on the left and the right to simplify all but one entry, which we choose to be the $(2,1)$ entry. This means that we may generally take $B(x)$ to be of the form
\begin{equation}\label{Bform4}
B(x) = \begin{pmatrix} 
(x-\lambda_1)(x-z) + y_1 &w(x-z)\\ 
\dfrac{b_0 + b_1x + b_2 x^2 + b_3 x^3}{w} & (x-\lambda_2)(x-z) + y_2
\end{pmatrix}, 
\end{equation}
where the $b_i$ and $y_j$ for $i = 1,2,3,4$ and $j = 1,2$ are specified up to one parameter by \eqref{detB}. These values are
\begin{align*}
b_ 0 =& \dfrac{t \zeta_4 - (y_1+z\lambda_1)(y_2+z\lambda_2)}{z},\\
b_1 =& t \zeta_1 z- t \zeta_2 -t z^2 +\lambda _1 \lambda _2+y_1+y_2+\left(\lambda _1+\lambda _2\right) z,\nonumber \\
b_2 =& t(\zeta_1-z)-\lambda_1 -\lambda_2 -z, \nonumber \\
b_3 =& 1-t,
\end{align*}
where we have used the same elementary symmetric functions, $\zeta_1,\ldots, \zeta_4$, from the previous section. By suitable gauge transformations, we may take $\lambda_1 = \lambda_2 = 0$, since they are constant with respect to the continuous isomonodromic deformations. We do not simplify in this way because they satisfy some difference equation with respect to the discrete isomonodromic deformations. They essentially play the same role as $w$ in the previous section, i.e., they are variables encapsulating some left and right gauge freedom.

To obtain \eqref{P6} we let
\[
y_1 = (z-a_1)(z-a_2)y, \hspace{1cm} y_2 = \dfrac{t(z-a_3)(z-a_4)}{y}.
\]
These values of $b_i$ and $y_j$ for $i = 0,\ldots, 4$ and $j = 1,2$ are sufficient to ensure that \eqref{detB} is satisfied. 

\begin{thm}
The continuous isomonodromic deformations of \eqref{linear1} where $A(x)$ is given by \eqref{ABfactor} and \eqref{Bform4} is equivalent to $y(t)$ satisfying \eqref{P6} for
\begin{align*}
\alpha &= \dfrac{(a_1-a_2)^2}{2}, \hspace{1cm} \beta = -\dfrac{(a_3-a_4)^2}{2}, \\ 
\gamma &= \dfrac{(a_1+a_2+\sigma)^2}{2}, \hspace{1cm} \delta = -\dfrac{(a_3+a_4+\sigma)(2+a_3+a_4+\sigma)}{2},
\end{align*}
\end{thm}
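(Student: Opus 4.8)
The plan is to mirror the structure of the proof of the earlier theorem for the non-symmetric difference-differential Lax pair (the one with $A(x)$ given by \eqref{AdP5}), adapting each step to accommodate the factorization \eqref{ABfactor}. First I would write down the ansatz for the deformation matrix $\mathcal{B}(x)$. Since the factorization $A(x) = B(-x-\sigma-1)^{-1}B(x)$ forces $A(x)$ to have degree-two numerator structure with the symmetry \eqref{symA}, and since the singularities of $A(x)$ occur at the roots of $\det B(-x-\sigma-1)$, namely at $x = -\sigma-1-a_i$, I expect $\mathcal{B}(x)$ to again be at most linear in $x$ with a small number of undetermined entries. The compatibility condition is \eqref{Laxasym}, i.e. $\mathrm{d}A/\mathrm{d}t + A(x)\mathcal{B}(x) - \mathcal{B}(x+1)A(x) = 0$; substituting $A = B(-x-\sigma-1)^{-1}B(x)$ one checks that this is implied by the simpler ``potential'' equation
\begin{equation*}
\dfrac{\mathrm{d}B(x)}{\mathrm{d}t} = \mathcal{C}(x)B(x) - B(x)\mathcal{B}(x),
\end{equation*}
for a matrix $\mathcal{C}(x)$, where consistency with the symmetry requires $\mathcal{C}(x) = \mathcal{B}(-x-\sigma-1)$ up to the scalar normalization of $B$. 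Reducing \eqref{Laxasym} to this $B$-level equation is the conceptual heart of the argument: it is the symmetric analogue of the Schlesinger-type reduction, and it both cuts the number of unknowns roughly in half and makes the symmetry manifest.

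Next I would solve the $B$-level compatibility equation entry by entry. With $\lambda_1,\lambda_2$ and $w$ carried along as the gauge variables (playing the role $w$ played in the non-symmetric case), matching powers of $x$ on both sides produces: (i) linear equations fixing the entries of $\mathcal{B}(x)$ in terms of $y$, $z$, $w$, $\lambda_1$, $\lambda_2$ and the parameters $a_i,\sigma,t$; and (ii) the evolution equations for $y$, $z$, $w$, $\lambda_1$, $\lambda_2$. I expect the result to take the same shape as \eqref{H6eq}: a Riccati-type equation $z' = (\text{quadratic in }y,z)/(t(t-1))$, an equation for $y'$ that is quadratic in $y$ with $z$-dependent coefficients, and decoupled logarithmic-derivative equations for the gauge variables $w$, $\lambda_1$, $\lambda_2$. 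The decoupling of the gauge variables is what justifies the earlier remark that one may set $\lambda_1=\lambda_2=0$ without affecting the continuous flow.

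Finally I would eliminate $z$ between the two coupled equations for $y'$ and $z'$. Differentiating the $y'$-equation in $t$, substituting $z'$ from its own equation, and then using the $y'$-equation again to solve for $z$ algebraically in terms of $y$ and $y'$, one obtains a second-order ODE for $y$. Matching this against \eqref{P6} in standard form pins down the parameters; tracking the roots $a_i$ of $\det B(x)$ through the computation, together with the shift $\sigma$ coming from the symmetry point $-x-\sigma-1$, should produce exactly $\alpha = (a_1-a_2)^2/2$, $\beta = -(a_3-a_4)^2/2$, $\gamma = (a_1+a_2+\sigma)^2/2$ and $\delta = -(a_3+a_4+\sigma)(2+a_3+a_4+\sigma)/2$. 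The main obstacle is the algebra of the reduction in the first paragraph: verifying that \eqref{Laxasym} for $A=B(-x-\sigma-1)^{-1}B(x)$ really does descend to a $B$-level equation with $\mathcal{C}(x)=\mathcal{B}(-x-\sigma-1)$, and not to something with extra inhomogeneous terms, requires care with the scalar normalization of $B$ (recall $B$ is only a polynomial after clearing scalar factors) and with the fact that $B(x)$ is defined only up to a left constant gauge. Everything after that reduction is a determined, if lengthy, computation entirely parallel to the non-symmetric case already proved in the excerpt.
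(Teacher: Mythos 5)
Your overall architecture matches the paper's: descend from the compatibility condition for $A(x)$ to a two-sided Lax equation at the level of $B(x)$, determine the multiplier matrices from the resulting overdetermined system, read off a first-order system for $(y,z)$ of the same shape as \eqref{H6eq}, and eliminate $z$ to land on \eqref{P6}. However, the step you call the conceptual heart contains a genuine error. You assert that in
\[
\dfrac{\mathrm{d}B(x)}{\mathrm{d}t} = \mathcal{C}(x)B(x) - B(x)\mathcal{B}(x)
\]
consistency forces $\mathcal{C}(x)=\mathcal{B}(-x-\sigma-1)$. It does not, and cannot. The deformation matrix of \eqref{symdeform2} is forced by \eqref{linsymb} to satisfy $\mathcal{B}(x)=\mathcal{B}(-x-\sigma)$, whereas substituting $A(x)=B(-x-\sigma-1)^{-1}B(x)$ into \eqref{Laxasym} shows that the correct condition on $\mathcal{C}(x)=\bigl(\dot B(x)+B(x)\mathcal{B}(x)\bigr)B(x)^{-1}$ is only the reflection symmetry $\mathcal{C}(x)=\mathcal{C}(-x-\sigma-1)$; beyond that, $\mathcal{C}$ is an \emph{independent} unknown, encoding the left gauge ambiguity of $B(x)$. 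If you instead impose $\mathcal{C}(x)=\mathcal{B}(-x-\sigma-1)$, then $\mathcal{C}(x)=\mathcal{C}(-x-\sigma-1)$ translates into $\mathcal{B}(x)=\mathcal{B}(-x-\sigma-1)$, which combined with $\mathcal{B}(x)=\mathcal{B}(-x-\sigma)$ makes $\mathcal{B}$ $1$-periodic in $x$, hence constant (your hedge about scalar normalization only relaxes this to ``constant traceless part''). That is incompatible with the actual multipliers, which are quadratic in $x$ --- linear in $x(x+\sigma)$ and $x(x+\sigma+1)$ respectively --- so your ansatz would leave the overdetermined system without a nontrivial solution. Moreover, your potential equation with that choice of $\mathcal{C}$ does not even imply \eqref{Laxasym}: the term $B(-x-\sigma-1)^{-1}\mathcal{B}(-x-\sigma-1)B(x)$ does not collapse to $\mathcal{B}(x+1)A(x)$ without additional commutation relations.

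The repair is exactly \eqref{comp}: introduce $B_l$ and $B_r$ as two separate unknown matrices, constrained only by $B_l(x)=B_l(-x-\sigma-1)$, $B_r(x)=B_r(-x-\sigma)$, and the requirement that $B_lB-BB_r$ preserve the polynomial form \eqref{Bform4}; the overdetermined system then fixes both, and the surviving equations reproduce \eqref{H6eq}. Once this correction is made, the remainder of your outline (entry-by-entry matching, decoupling of the gauge variables $w,\lambda_1,\lambda_2$, and elimination of $z$) goes through as you describe.
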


\begin{proof}
The goal is to relate the isomonodromic deformations to an evolution equation for $B(x)$, which, if done correctly, preserves the symmetry condition for $A(x)$. In terms of $Y(x)$, the isomonodromic deformations are specified by an equation of the form
\begin{align}\label{symdeform2}
\dfrac{\mathrm{d}Y(x)}{\mathrm{d}t} = B_r(x)Y(x),
\end{align}
where $B_r(x)$ must satisfy $B_r(x) = B_r(-\sigma-x)$ due to the symmetry condition. We also note that $B(x)$ is only determined up to scalar matrix multiplication on the left, by some factor $B_l(x)$ satisfying $B_l(x) = B_l(-x-\sigma-1)$. These two conditions result in a consistency condition in which we have matrices acting upon the left and right of $B(x)$. The resulting symmetric analogue of \eqref{Laxasym} is 
\begin{equation}\label{comp}
\dfrac{\mathrm{d}B(x)}{\mathrm{d}t} = B_{l}(x) B(x) - B(x) B_{r}(x).
\end{equation}
When $B_l(x)$ and $B_r(x)$ are linear in $x(x+\sigma)$ and $x(x+\sigma+1)$ we have an overdetermined system with $B_l(x)$ and $B_r(x)$ given by 
\begin{align}
B_l(x) =& \begin{pmatrix} 0 & 0 \\ \dfrac{(1+\sigma)\lambda_2}{tw}-\dfrac{x(x+\sigma+1)}{tw}  & \dfrac{\sigma+1}{t} - \dfrac{w'}{w} \end{pmatrix} \\
&+ \dfrac{1}{1-t}\begin{pmatrix} a_1+a_2-z-\dfrac{\lambda_2}{t} & -\dfrac{w}{t} \\ -\dfrac{b_1+(b_2+b_3z)(z+\lambda_2)}{tw} - \dfrac{b_3z'}{w} & z-a_3-a_4+ \dfrac{\lambda_2}{t} \end{pmatrix}, \nonumber \\
B_r(x) =& \begin{pmatrix} \dfrac{\sigma}{t} & 0 \\ \dfrac{\sigma \lambda_1}{tw}- \dfrac{x(x+\sigma)}{tw} & -\dfrac{w'}{w} \end{pmatrix} \\
&- \dfrac{1}{1-t}\begin{pmatrix} a_3+a_4 - z - \dfrac{\lambda_1}{t}& \dfrac{w}{t} \\  \dfrac{b_1+(b_2+b_3z)(z+\lambda_1)}{tw} - \dfrac{b_3z'}{w}& z-a_1-a_2+ \dfrac{\lambda_1}{t}\end{pmatrix},\nonumber 
\end{align}
and the remaining conditions in \eqref{comp} give us
\begin{align}
z' =& \dfrac{y_1-y_2}{t(t-1)},\\
y' =&  \dfrac{1}{t(t-1)}\left( y^2(a_1+a_2-2z)+t(a_3+a_4-2z)\right. \\
& \left.+ y(\sigma-t(a_1+a_2+a_3+a_4+\sigma-2z)+2z) \right).\nonumber
\end{align}
These two equations coincides precisely with \eqref{H6eq} from the previous section. Following the same steps, we eliminate $z$ to show that $y$ satisfies \eqref{P6} for the given choice of parameters.
\end{proof}

The above shows that \eqref{linsyma}, \eqref{linsymb} and \eqref{symdeform2} constitutes a Lax pair for \eqref{P6}. However, due to the way in which we factor $A(x)$, the consistency condition, \eqref{comp}, is more naturally expressed in terms of $B(x)$ rather than $A(x)$.

One of the most interesting aspects of the symmetric Lax pairs is the existence of classes of symmetries that do not appear naturally in the work of Schlesinger transformations \cite{Jimbo:Monodromy2}. It is very natural to ask what the analogue of the Schlesinger transformations look like in the symmetric setting, in particular, how does \eqref{dP5} arise as a discrete isomonodromic deformation of a symmetric system of difference equations.

We find that the evolution of \eqref{dP5} is expressible in terms of the composition of three transformations, none of which simply shift the variables in the same way as the Schlesinger transformations in the asymmetric setting. It should be noted that while the composition of the transformations commutes with the time evolution of \eqref{P6}. It would be interesting to relate these transformations to a set of known generators of the associated affine Weyl group $D_4^{(1)}$. 

One interesting consequence of relating the parameters of \eqref{dP5} to \eqref{linsyma} and \eqref{linsymb} is that the value of  $\sigma$ is shifted with evolution of \eqref{dP5}, signifying that we seek a transformation in which the resulting symmetry is changed. Suppose we consider a transformation of the form \eqref{discisomonodromy}, where $R(x) = B(-x-\sigma)$. If $A(x)$ is given by \eqref{ABfactor}, then by using \eqref{ddcomp} we find that $\tilde{A}(x)$ is given by
\begin{align*}
\tilde{A}(x) =  B(x) B(-x-\sigma)^{-1}.
\end{align*}
This transformation changes the symmetry of the system as we notice that $\tilde{A}(x)$ satisfies 
\[
\tilde{A}(x) \tilde{A}(-x-\tilde{\sigma}-1) = I.
\]
If we denote the new value of $\sigma$ by $\tilde{\sigma}$, then the above indicates that $\tilde{\sigma} = \sigma -1$. By suitably multiplying by the determinant, we have that the relevant transformation is given by
\begin{align*}
\tilde{B}(x) =& t(x+\sigma+ a_1)(x+\sigma+ a_2)\\ 
&(x+\sigma+ a_3)(x+\sigma+ a_4)B(-x-\sigma)^{-1},
\end{align*}
which also corresponds to the adjoint. This means that the effect on the parameters, $y$ and $z$, are given by
\begin{align*}
\tag{$S_1$}\label{S1}&\left( \begin{array}{c c}
a_1 & a_2 \\
a_3 & a_4 \end{array}\, \sigma; y,z \right)\\  &\to \left( \begin{array}{c c}
-a_1-\sigma & -a_2-\sigma \\
-a_3-\sigma & -a_4-\sigma \end{array}\, \sigma-1;\dfrac{t}{y} \dfrac{(z-a_3)(z-a_4)}{(z-a_1)(z-a_2)} ,-\sigma-z \right).
\end{align*}
The auxiliary variables, $\lambda_1$ and $\lambda_2$ satisfy $\tilde{\lambda}_1= -\sigma-\lambda_2$ and $\tilde{\lambda}_2=-\sigma-\lambda_1$ respectively.

For the second transformation, we consider a matrix acting on the left of $B(x)$. As we explored in a recent paper on discrete Garnier systems, we seek a transformation in which we take two parameters, in the case that is relevant, we take $a_3$ and $a_4$, and send them to $-a_3-\sigma-1$ and $-a_4-\sigma-1$ respectively. Note that this matrix is easily calculated from the Lax equation 
\begin{align}\label{compl}
(x-a_3)(x-a_4)\tilde{B}(x) = R_l(x) B(x),
\end{align}
which is induced by a matrix $R_l(x)$ with the property that $R_l(x)= R_l(-x-\sigma-1)$. This calculation shows
\begin{align*}
&R_l(x) = \begin{pmatrix} (x-a_4)(x+\sigma+1+a_4) & 0 \\ 0 & (x-a_3)(x+\sigma+1+a_3)\end{pmatrix} \\
&+ (z-\tilde{z})\begin{pmatrix}  z - a_3 - \dfrac{y(a_4-\lambda_2)}{t} & \dfrac{wy}{t} \\ & \\ 
-\dfrac{t}{wt} \left(z-a_3-\frac{y(a_4-\lambda_2)}{t} \right)\left(z-a_4-\frac{y(a_3-\lambda_2)}{t} \right) & a_3-z+\dfrac{y(a_4-\lambda_2)}{t}\end{pmatrix}
\end{align*}
which then induces the transformation
\begin{align}
\tag{$S_2$}\label{S2}&\left( \begin{array}{c c}
a_1 & a_2 \\
a_3 & a_4 \end{array}\, \sigma; y,z \right)\\  &\to \left( \begin{array}{c c}
a_1 & a_2 \\ -a_3-\sigma-1 & -a_4-\sigma-1 \end{array}\, \sigma; y , z + \dfrac{t(1+\sigma+a_3+a_4)}{y-t} \right)\nonumber.
\end{align}
where the variables $\lambda_1$ and $\lambda_2$ become $\lambda_1 + y(z-\tilde{z})$ and $\lambda_2 + t^{-1} y(z-\tilde{z})$ respectively where by $\tilde{z}$ we mean the transformed value of $z$. 

The last transformation to form \eqref{dP5} is a transformation that acts on the right of $B(x)$. We take $a_1$ and $a_2$ and send these values to $-a_1-\sigma$ and $-a_2 -\sigma$ respectively. The relevant Lax equation for this transformation is given by 
\begin{equation}\label{Laxr}
(x-a_1)(x-a_2)\tilde{B}(x) = B(x)R_r(x),
\end{equation}
where $R_r(x) = R_r(-x-\sigma)$. It easy to calculate this matrix from the properties of $B(x)$ and \eqref{Laxr}. This calculation leads us to 
\begin{align*}
&R_r(x) = \begin{pmatrix} (x-a_1)(x+\sigma+a_1) & 0 \\ 0 & (x-a_2)(x+\sigma+a_2) \end{pmatrix}\\
& + (z-\tilde{z}) \begin{pmatrix} a_1-z+\dfrac{a_2-\lambda_1}{y} & \dfrac{w}{y} \\ -\dfrac{y}{w}\left(a_1-z+\dfrac{a_2-\lambda_1}{y} \right)\left(a_2-z+\dfrac{a_1-\lambda_1}{y} \right) & z-a_1- \dfrac{a_2-\lambda_1}{y} \end{pmatrix},
\end{align*}
with the transformation being 
\begin{align}
\tag{$S_3$}\label{S3}&\left( \begin{array}{c c}
a_1 & a_2 \\
a_3 & a_4 \end{array}\, \sigma; y,z \right)\to \left( \begin{array}{c c}
-a_1-\sigma & -a_2-\sigma \\ a_3 & a_4 \end{array}\, \sigma; y , z - \dfrac{(\sigma+a_1+a_2)y}{y-1} \right).
\end{align}
This also changes $\lambda_1$ and $\lambda_2$ to $\lambda_1 - \sigma-a_1-a_2+z-\tilde{z}$ and $\lambda_2 - t(\sigma-a_1-a_2+z-\tilde{z})$ respectively.

\begin{cor}
The discrete version of the fifth Painlev\'e equation arises as a discrete isomonodromic deformation, \eqref{deform2}, of \eqref{linear1} with $A(x)$ given by \eqref{ABfactor} and \eqref{Bform4} for 
\[
R(x) = (R_r(x)^{S_1})^{-1} B(-x-\sigma),
\]
where $R_r(x)^{S_1}$ denotes the value of $R_r(x)$ with the action of $S_1$ applied to the entries.
\end{cor}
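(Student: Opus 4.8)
The plan is to realise \eqref{dP5} as the composition $S_2\circ S_3\circ S_1$ of the three transformations \eqref{S1}, \eqref{S2}, \eqref{S3}, and then to read off the resulting gauge matrix from \eqref{ddcomp}.

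First I would verify the composition at the level of the discrete data. On parameters, $S_1$ sends $(a_1,a_2,a_3,a_4;\sigma)$ to $(-a_i-\sigma;\sigma-1)$; then $S_3$, evaluated at the new value $\sigma-1$, returns $a_1,a_2$ to $a_1+1,a_2+1$; then $S_2$, also at $\sigma-1$, returns $a_3,a_4$ to themselves, so $S_2\circ S_3\circ S_1$ is exactly the translation underlying \eqref{dP5}. I would then push $(y,z)$ through the same composition using the birational formulas recorded in \eqref{S1}, \eqref{S2}, \eqref{S3}: $S_1$ already outputs $\tfrac{t}{y}\tfrac{(z-a_3)(z-a_4)}{(z-a_1)(z-a_2)}$ for $\tilde y$, which is the second relation of \eqref{dP5}; $S_3$ contributes the $\tfrac{a_1+a_2+\sigma+1}{\tilde y-1}$ term of $\tilde z+z$; and $S_2$ contributes the remaining $-\tfrac{t(a_3+a_4+\sigma)}{\tilde y-t}$ term. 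A parallel computation with the auxiliary variables $\lambda_1,\lambda_2$ shows the composite birational map is precisely \eqref{dP5} together with the companion relation for the gauge freedom, so \eqref{dP5} is already seen to be a discrete isomonodromic deformation of the symmetric system; what remains is to identify its gauge matrix in closed form.

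Next I would assemble the gauge matrix. For a deformation of the factorised system $A(x)=B(-x-\sigma-1)^{-1}B(x)$, the matrix $R(x)$ acting on $Y(x)$ in \eqref{discisomonodromy} is determined by \eqref{ddcomp}, and a short computation using the symmetry of the factors gives: $S_1$ is induced by $R(x)=B(-x-\sigma)$; using $R_r(x)=R_r(-x-\sigma)$, the transformation \eqref{S3} defined by \eqref{Laxr} is induced (up to a scalar rational factor) by $R_r(x)^{-1}$; and, using $R_l(x)=R_l(-x-\sigma-1)$, the transformation \eqref{S2} defined by \eqref{compl} acts on $A(x)$ only by a scalar rational factor, so it merely recoordinatises the $B$-factorisation, i.e. shifts $\lambda_1,\lambda_2$ (hence the reading of $z$) without changing the matrix effecting the deformation. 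Composing --- and since $S_3$ is applied after $S_1$, the factor realising it is $R_r(x)$ with the substitution $S_1$ applied to its entries, namely $R_r(x)^{S_1}$ --- the gauge matrix for $S_2\circ S_3\circ S_1$ collapses to
\[
R(x)=\bigl(R_r(x)^{S_1}\bigr)^{-1}B(-x-\sigma),
\]
which is the asserted form. I would then confirm directly that this $R(x)$ is rational in $x$ with the pole/zero divisor appropriate to a gauge map between linear problems of the form \eqref{ABfactor}--\eqref{Bform4}, that $R(x+1)A(x)R(x)^{-1}$ is again of that form with the \eqref{dP5}-shifted parameters, and that the induced action on $(y,z)$ agrees with that found in the first step.

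The main obstacle is the scalar and auxiliary-variable bookkeeping in the middle step: one must show that the scalar prefactors produced by the symmetric matrices $R_r$, $R_l$ and by $B(-x-\sigma)$ --- which appear because altering the $a_i$ alters $\det B$ --- cancel, so that the composite is exactly the two-factor product above with no residual scalar, and that no apparent singularity is created; equivalently, that $S_2$ really is invisible at the level of $Y(x)$ even though it moves $z$. I expect the cleanest route is to verify $R(x+1)A(x)=\tilde A(x)R(x)$ directly from the residue and expansion relations that were used to define each $S_i$, rather than by multiplying the three matrices out symbolically.
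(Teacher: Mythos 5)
Your proposal is correct and follows essentially the same route as the paper: both realise \eqref{dP5} as the composition $S_2\circ S_3\circ S_1$, identify $B(-x-\sigma)$ and $(R_r(x)^{S_1})^{-1}$ as the gauge factors for $S_1$ and the subsequent $S_3$, and observe that $S_2$, being a left recoordinatisation of the $B$-factorisation via $R_l$ with $R_l(x)=R_l(-x-\sigma-1)$, contributes no further matrix factor beyond the scalar already produced by the first two steps. The only difference is direction of travel (you build $R(x)$ up from the decomposition, the paper starts from the stated $R(x)$ and peels it apart), and your explicit check that the composition reproduces the $(y,z)$ map of \eqref{dP5} is a point the paper leaves to the reader.
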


\begin{proof}
By using \eqref{ddcomp} and \eqref{ABfactor} we have that
\begin{align}
\tilde{A}(x) &= R(x+1)B(-x-\sigma-1)^{-1}B(x) R(x)^{-1}\\
 &= (R_r(x+1)^{S_1})^{-1} B(x) B(-x-\sigma)^{-1} R_r(x)^{S_1},
\end{align}
however, we use the fact that 
\[
B(-x-\sigma)^{-1} = \dfrac{B(x)^{S_1}}{(x+a_1+\sigma)(x+a_2+\sigma)(x+a_3+\sigma)(x+a_4+\sigma)},
\]
and note that $R_r(x)^{S_1}$ is symmetric around $\sigma -1$ (as opposed to $\sigma$). This means the above may be written as
\begin{align*}
\tilde{A}(x) =& \dfrac{(x-a_1)(x-a_2)(x-a_3)(x-a_4)}{(x+a_1+\sigma)(x+a_2+\sigma)(x+a_3+\sigma)(x+a_4+\sigma)} \\
& \left( B(-x-\sigma)^{S_1} R_r(-x-\sigma)^{S_1} \right)^{-1}  \left( B(x)^{S_1} R_r(x)^{S_1} \right),
\end{align*}
where $B(x)^{S_1} R_r(x)^{S_1} = B(x)^{S_3 \circ S_1}$, giving
\[
 \dfrac{(x-a_3)(x-a_4)}{(x+a_3+\sigma)(x+a_4+\sigma)} \left( B(-x-\sigma)^{S_3\circ S_1} \right) B(x)^{S_3\circ S_1}.
\]
We now note that this formulation is only defined up to some left multiplication by $R_l(x)^{S_1\circ S_3}$ (which now satisfies  $R_l(x)^{S_1\circ S_3} = R_l(-\sigma-x)^{S_1\circ S_3}$), which recovers the remaining part of the calculation, showing
\[
\tilde{A}(x) = \left( B(-x-\sigma)^{S_2 \circ S_3 \circ S_1} \right)^{-1} B(x)^{S_2 \circ S_3 \circ S_1}.
\]
Since each transformation, \eqref{S1}, \eqref{S2} and \eqref{S3}, has a simple effect on the parameters $y$ and $z$, it is easy to see that \eqref{dP5} arises as the composition of \eqref{S1} followed by \eqref{S3} and then \eqref{S2}. 
\end{proof}

\section{The list of symmetric Lax pairs}

We simplify the above situation a little by letting $\sigma = 0$ and by letting $\lambda_1 = \lambda_2 = 0$. These simplifications are perfectly valid for considering the difference differential Lax pairs. These values only change when one is calculating the symmetries, as we have done for \eqref{P6} above but we do not present for the cases below. 

Under these simplifying assumptions, each of the Lax pairs we present takes the form \eqref{linear1} in which $A(x)$ is given by
\[
A(x) = B(-x-1)^{-1} B(x),
\]
where $B(x)$ takes the form
\begin{equation}\label{Laxgeneral}
B(x) = \begin{pmatrix} 
x(x-z) + y_1 & x-z \\ 
b_0 + b_1x + b_2 x^2 + b_3 x^3 & x(x-z) + y_2
\end{pmatrix}, 
\end{equation}
where the $b_i$ and $y_j$ for $i = 1,2,3,4$ and $j = 1,2$ are specified up to one parameter by a determinant. 

\subsection{A Lax pair for \eqref{P5}} The corresponding Lax pair for \eqref{P5} is specified by
\[
\det B(x) = t(x-a_1)(x-a_2)(x-a_3).
\]
This gives linear conditions on each of the $b_i$, which may be solved to give
\begin{align*}
b_0 &= -t \left(\left(a_2-z\right) \left(a_3-z\right)+a_1 \left(a_2+a_3-z\right)\right),\\
b_1 &=t(a_2+a_3) + y_1 + y y_2,\\
b_2 &= -t-z,  \hspace{2cm} b_3 = 1.
\end{align*}
Our choice for $y_1$ and $y_2$ are given by
\begin{equation}
y_1 = (1-y) \left(z-a_2\right) \left(z-a_3\right), \hspace{1cm} y_2 = \frac{t \left(z-a_1\right)}{1-y}.
\end{equation}
The isomonodromic deformations are also given by \eqref{comp} where 
\begin{align}
B_{l}(x) =& \begin{pmatrix} 
0 & -\dfrac{1}{t} \\
 -\dfrac{x (x+1)+t \left(a_2+a_3-z\right)+(y+1) y_2}{t} & 1+\dfrac{1}{t} 
 \end{pmatrix},\\
 B_{r}(x) =& \begin{pmatrix} 
  1 & -\dfrac{1}{t} \\
 -\dfrac{x^2+t \left(a_2+a_3-z\right)+2 y_1+(y-1) y_2}{t} & 0
 \end{pmatrix}.
\end{align}
From \eqref{comp} we obtain
\begin{subequations}\label{P5H}
\begin{align}
y' &= \dfrac{(a_2+a_3) (y-1)^2-t y}{t}-\dfrac{2 (y-1) y z}{t},\\
z' &= \dfrac{y_2-y_1}{t}.
\end{align}
\end{subequations}
The system \eqref{P5H} written in terms of $y$ alone give \eqref{P5} for the parameters 
\begin{align}
\alpha = \dfrac{(a_2-a_3)^2}{2}, \hspace{1cm} \beta = -\dfrac{(a_2+a_3)^2}{2}, \hspace{1cm} \gamma = -1-2a_1.
\end{align}

\subsection{A Lax pair for \eqref{P31}} To make a correspondence with \eqref{P31}, we find that the relevant time variable is given by the square root of the leading term, i.e., our associated linear problem is still specified by a matrix, $B(x)$, of the form \eqref{Laxgeneral} where we take the determinant to be
\[
\det B(x) = t^2(x-a_1)(x-a_2).
\]
which specifies linear conditions for $b_0$, \ldots, $b_3$ which are satisfied by
\begin{align*}
b_0 = t^2(a_1+a_2-z), \hspace{1cm} b_1 =y_1+y_2 -t^2,\\
b_2 = -z,  \hspace{2cm} b_3 = 1,\hspace{1cm}
\end{align*}
where $y_1$ and $y_2$ are
\begin{align}
y_1 = ty(z-a_2), \hspace{1cm}y_2= \dfrac{t(z-a_1)}{y}.
\end{align}
The relevant compatibility condition is given by \eqref{comp} where
\begin{align}
B_l(x) &= \begin{pmatrix}  0 & -\dfrac{2}{t} \\
 \dfrac{2 \left(t^2-x (x+1)-2 y_2\right)}{t} & \dfrac{2}{t}
 \end{pmatrix},\\
B_r(x) &= \begin{pmatrix} 
 0 & -\frac{2}{t} \\
- \dfrac{2 x^2-2 t^2+4 y_1}{t} & 0 
 \end{pmatrix}.
\end{align}
Computing the entries of \eqref{comp} reveals the system of first order equations
\begin{subequations}
\begin{align}
y' &= \frac{y (2 t y-4 z-1)+2 t}{t},\\
z' &= \dfrac{2 \left(y_2-y_1\right)}{t},
\end{align}
\end{subequations}
which is equivalent to $y$ satisfying \eqref{P31} for
\begin{align}
\alpha = -8a_2, \hspace{1cm} \beta = 4(1+2a_1), \hspace{1cm} \gamma = 4, \hspace{1cm} \delta= -4.
\end{align}
\subsection{A Lax pair for \eqref{P32}} The associated linear problem is of the form \eqref{Laxgeneral} where
\[
\det B(x) = t(x-a_1).
\]
This condition determines that $b_0$, \ldots, $b_3$ are
\begin{align*}
b_0 = -t, \hspace{1cm} b_1 =\dfrac{ta_1 - y^2-tz}{y} , \hspace{1cm} b_2 = -z,  \hspace{2cm} b_3 = 1,
\end{align*}
with 
\begin{equation}
y_1 = -y, \hspace{1cm} y_2 = -\dfrac{t(z-a_1)}{y}.
\end{equation}
The left and right transformation matrices in \eqref{comp} are
\begin{align}
B_l(x) &= \begin{pmatrix} 
0 & -\dfrac{1}{t} \\
 \dfrac{2 t z-x (x+1) y-2 t a_1}{t y} & \dfrac{1}{t}
 \end{pmatrix},\\
 B_r(x) &= \begin{pmatrix}
   0 & -\dfrac{1}{t} \\
 \dfrac{2 y-x^2}{t} & 0 
 \end{pmatrix}.
 \end{align}
Using these matrices in \eqref{comp} results in the following system of first order differential equations
\begin{subequations}
\begin{align}
y' &= -1-\dfrac{2yz}{t},\\
z' &= \dfrac{y_2 -y_1}{t},
\end{align}
\end{subequations}
which implies that $y$ satisfies \eqref{P32} for the parameter
\[
\beta = -4-8a_1.
\]

\subsection{A Lax pair for \eqref{P33}} The associated linear problem for this last case is of the form \eqref{Laxgeneral} where we make a correspondence with \eqref{P33} when
\[
\det B(x) = \dfrac{t}{4}.
\]
Comparing coefficients imposes enough constraints to determine that $b_0$, \ldots, $b_3$ are 
\begin{align*}
b_0 = 0, \hspace{1cm} b_1 =\dfrac{t+y^2}{2y} , \hspace{1cm} b_2 = -z,  \hspace{2cm} b_3 = 1
\end{align*}
with $y$ being defined by
\[
y_1 = \dfrac{y}{2} , \hspace{1cm} y_2 = \dfrac{t}{2y}.
\]
Under this choice we determine that $R_l(x)$ and $R_r(x)$ are
\begin{align}
B_l(x) &= \begin{pmatrix} 
 0 & -\dfrac{1}{t} \\
 -\dfrac{x (x+1)}{t}-\dfrac{1}{y} & \dfrac{1}{t} \\
 \end{pmatrix},\\
B_r(x) &= \begin{pmatrix}
0 & -\dfrac{1}{t} \\
-\dfrac{x^2}{t}-\dfrac{y}{t} & 0 \\
 \end{pmatrix}.
 \end{align}
These matrices, when used in \eqref{comp}, give the system
\begin{subequations}
\begin{align}
y' &= -\dfrac{2yz}{t},\\
z' &= \dfrac{y_2 -y_1}{t}.
\end{align}
\end{subequations}

\section{A symmetric version of Painlev\'e VI}

The aim of this section is to determine a version of \eqref{P6}. For this, we simply take a slightly different form of $B(x)$ from the previous sections, in which
\begin{equation}
B(x) = \begin{pmatrix} x^2 - z^2 + y & x-z\\ 
b_0 + b_1x + b_2 x^2 + b_3 x^3 & \dfrac{x^2 - z^2}{t} + y_2 
\end{pmatrix}.
\end{equation}
The coefficients of that determinant by
\begin{equation}
\det B(x) = t\sum_{k=0}^{4}  m_k x^k
\end{equation}
so that the coefficients are symmetric in the roots of $\det B(x)$. Following the same steps as above we have
\begin{align*}
b_0 &=\dfrac{m_0 t+\left(y-z^2\right) \left(z^2-y_2\right)}{z}, \\
b_1 &=\dfrac{m_1 t z+m_0 t+\left(y-z^2\right) \left(z^2-y_2\right)}{z^2},\\
b_2 &=z-t \left(m_4 z+m_3\right),\\
b_3 &=1- t m_4,
\end{align*}
where 
\[
y_2 = \dfrac{\det B(z)}{y}.
\]
With this choice, we find that the left and right matrices are
\begin{align}
B_l(x) &= \begin{pmatrix}
\dfrac{z}{t} & -\dfrac{1}{t b_3} \\
\dfrac{t m_2-z(b_2+b_3-2z)-y-y_2}{tb_3} - \dfrac{x(x+1)}{t} & \dfrac{b_3-b_2}{b_3t} \\
\end{pmatrix},\\
B_r(x) &= \begin{pmatrix}
- \dfrac{b_2}{t b_3} & -\dfrac{1}{t b_3} \\
 \dfrac{t m_2 - z(b_2 - 2z)-y - y_2}{tb_3}-\dfrac{x^2}{t} & \dfrac{z}{t} \\
 \end{pmatrix},
\end{align}
whose compatibility results in the following version of \eqref{P6}
\begin{subequations}
\begin{align}
y' &= \dfrac{1}{m_4 t-1}\left( y \left(m_3 + 2\left(\frac{1}{t} + m_4 \right)z \right)- \left. \dfrac{1}{t}\dfrac{\mathrm{d}}{\mathrm{d}x} \det B(x) \right|_{x = z}\right),\\
z' &= \dfrac{1}{m_4 t-1}\left( \dfrac{y}{t} - \dfrac{\det B(z)}{ty} \right).
\end{align}
\end{subequations}
First of all, we recover \eqref{P6} if we make the substitution $Y = (z-a_1)(z-a_2)/y$ where $a_1$ and $a_2$ are roots of $\det B(x)$, then we obtain a  symmetric version of \eqref{P5} when $m_4 = 0$, and a symmetric version of \eqref{P31} when $m_4 = m_3 = 0$ and so on. At each stage there is a different transformation that relates the resulting equations to the versions of \eqref{P5} to \eqref{P33} that appear in the introduction.

\section{Discussion}

One of the interesting features of this work is that the symmetric Lax pairs seem to have a larger group of symmetries than that of the non-symmetric cases. For the non-symmetric cases, the relevant group acting on the parameter space has the structure of a lattice, whereas in the symmetric case, we have a group generated by a pairs of transformations, each pair is the generator of some infinite dihedral group. This setting is much closer to the affine Weyl symmetries of the Painlev\'e equations \cite{MR1957556} and the discrete Painlev\'e equations \cite{Noumi:AffinedPs}.

\section{Acknowledgements}

We would like to thank Frank Nijhoff for his correspondence and bringing our attention to \cite{Adler1994}.
\bibliographystyle{plain}%

\bibliography{../../refs}

\end{document}